\newcommand{\cmark}{\ding{51}}%
\newcommand{\xmark}{\ding{55}}%
\begin{document}
\title{A Quantum Finite Automata Approach to Modeling the Chemical Reactions} 

\author{Amandeep Singh Bhatia$^{1,^*}$,  Shenggen Zheng$^{2}$\\
$^{1}$\textit{Chitkara University Institute of Engineering \& Technology, Chitkara University, Punjab, India}\\
$^{2}$\textit{Center for Quantum Computing, Peng Cheng Laboratory, Shenzhen, China} \\
E-mail: $^{1,^*}$amandeepbhatia.singh@gmail.com}

\begin{abstract}
In recent years, the modeling interest has increased significantly from the molecular level to the atomic and quantum scale. The field of computational chemistry plays a significant role in designing computational models for the operation and simulation of systems ranging from atoms and molecules to industrial-scale processes. It is influenced by a tremendous increase in computing power and the efficiency of algorithms. The representation of chemical reactions  using classical automata theory in thermodynamic terms had a great influence on computer science. The study of chemical information processing   with quantum computational models is a natural goal. In this paper, we have modeled chemical reactions using two-way quantum finite automata, which are halted in linear time. Additionally, classical pushdown automata can be designed for such chemical reactions with multiple stacks. It has been proven that computational versatility can  be increased by combining chemical accept/reject signatures and quantum automata models.\\

\textbf{Keywords}: chemical reaction, quantum finite automata, pushdown automata, two-way quantum finite automata, Belousov-Zhabotinsky reaction
\end{abstract}
\maketitle

\maketitle



\newcommand{\myqed}{\rule{2pt}{1em}}
\newenvironment{myproof}{\begin{proof}}{\let\qedsymbol\myqed\end{proof}}
\theoremstyle{plain}
\newtheorem{thm}{Theorem}
\theoremstyle{definition}
\newtheorem{defn}{Definition}
\newtheorem{exmp}{Example}[section]
\section{Introduction \& Motivation}
Recently, the  connection between complex reactions and their thermodynamics has got overwhelming response among research communities. Initially, in the 1970s, Conrad \citep{1} processed the information of molecular systems and stated that complex biochemical systems cannot be analyzed on classical computers. Till now,  artificial approaches use complex biomolecules or logic gates based reaction-diffusion systems to solve the problems \cite{23, 24, 25}. Classical systems are not robust and capable to describe quantum systems. Some tasks that are impossible in classical systems can be realized in quantum systems. Quantum computing is concerned with computer technology based on the principles of quantum mechanics, which describes the behavior and nature of matter and energy on the quantum level \citep{2}. Quantum computation demonstrates the computation power and other properties of the computers based on principles of quantum mechanics.

Models of finite automata are abstract computing devices, which play a crucial role to solve computational problems in theoretical computer science. Classical automata theory is closely associated with formal language theory, where automata are ranked from simplest to most powerful based on their language recognition power \cite{3}. Classical automata theory has been of significant importance due to its practical real-time applications in the development of several fields. Therefore, it is a natural goal to study quantum variants of classical automata models, which play an important role in quantum information processing. 

The quantum automata theory has been developed using the principles of quantum mechanics and classical automata. Quantum computational models make it possible to examine the resources needed for computations.  Soon after the brainstorm of Shor's factorization quantum algorithm \cite{4}, the first models of quantum finite automata (QFAs) have been introduced. Initially, Kondacs and Watrous \cite{6}, and Moore and Crutchfield \cite{5} proposed the concept of quantum automata separately. There is a diversity of quantum automata models have been studied since then and demonstrated  in various directions such as quantum finite automata \cite{5}, Latvian QFA, 1.5-way QFA, two-way QFA (2QFA) \cite{6}, quantum finite state machines of matrix product states \cite{54}, quantum sequential machines \cite{58}, quantum pushdown automata, quantum Turing machine, quantum queue automata \cite{56}, quantum multicounter machines (QMCM), quantum multihead finite automata (QMFA), quantum finite automata with classical states (2QCFA) \cite{36, 37}, quantum omega automata \cite{55}, state succinctness of two-way probabilistic  finite automata (2PFA), QFA, 2QFA and 2QCFA \cite{51, 48, 49},  interactive proof systems with quantum finite automata \cite{52, 53} and semi-quantum two-way finite automata \cite{46, 47} and many more since last two decades  \citep{50, 57, 8, 9, 10}. These models are effective in determining the boundaries of various computational features and expressive power. Quantum computers are more powerful than Turing machines and even probabilistic Turing machines. Thus, mathematical models of quantum computation can be view
as generalizations of its physical models. 

Computational biochemistry has been a rapidly evolving research area at the interface between biology, chemistry, computer science, and mathematics. It helps us to apply computational models to understand biochemical and chemical processes and their properties. A combination of chemistry and classical automata theory provides a constructive means of refining the number of objects that allow to understand the energetic cost of computation \cite{11}.  The research has been consistently grown in the field of chemical computing. There exist two ways to model complex chemical reactions: abstract devices and formal models based on multiset rewriting \cite{13}. The complex chemical reaction networks carry out chemical processes that mimic the workings of classical automata models. Recently, Duenas-Diez and Perez-Mercader \cite{11, 12} have designed chemical finite automata for regular languages and chemical automata with multiple stacks for context-free and context-sensitive languages. Further, the thermodynamic  interpretation of the acceptance/rejection of chemical automata is given. It is useful to understand the energetic cost of chemical computation. They have used the one-pot reactor (mixed container), where chemical reactions and molecular recognition takes place after several steps, without utilizing any auxiliary geometrical aid.

In classical automata theory, it is known that two-way deterministic finite automata (2DFA) can be designed for all regular languages. It has also been investigated that two-way probabilistic finite automata (2PFA) can be designed for a non-regular language $L=\{a^nb^n \mid n \geq 1\}$ in an exponential time \cite{14, 15}. The research has consistently evolved in the field of quantum computation and information processing. In quantum automata theory, it has been proved that 2QFA can be designed for \textit{L} with one-sided bounded error and halted in linear time. Moreover, it has been demonstrated that 2QFA can be also designed for non-context free language $L=\{a^nb^nc^n \mid n \geq 1\}$ \cite{6}. Hence, 2QFA is strictly more powerful than its classical counterparts based on language recognition capability. 

The field of chemistry and chemical computing plays a significant role in the evolution of computational models to mimic the behavior of systems at the atomic level. It is greatly influenced by the computing power of quantum computers.
Motivated from the above-mentioned facts, we have modeled  chemical reactions in the form of formal languages and represented them using two-way quantum finite automata. The main objective is to examine how chemical reactions perform chemical sequence identification equivalent to quantum automata models without involving bio-chemistry or any auxiliary device. The crucial advantage of this approach is that chemical reactions in the form of accept/reject signatures can be processed in linear time with one-sided bounded error.  The organization of rest of this paper is as follows: Subsection is devoted to prior work. In Sect. 2, some preliminaries are given. The definition of two-way quantum finite automata is given in Sect. 3. In Sect. 4, the chemical reactions are transcribed in formal languages and modeled using two-way quantum finite automata approach. Summary of work is given in Section 5. Finally, Sect. 6 is the conclusion.

\subsection{Prior work}
The field of chemical computation has rich and interesting history. Various researchers have represented the chemical computation using the concept of logic gates based reaction-diffusion systems and artificial intelligence approaches. In early 1970s, Conrad \cite{1} differentiated the information processing on molecules from digital computing. Nearly a decade after, Okamoto et al. \cite{16} proposed the concept of a theoretical chemical diode for cyclic enzyme systems. It has been proved that it can be used to analyze the dynamic behavior of metabolic switching events in bio-computer.  In 1991, Hjelmfelt et al. \cite{17} designed neural networks and finite state machines using chemical diodes. It has been investigated that the execution of a universal Turing machine is possible using connecting chemical diodes. 

In 1995, Tóth and Showalter \cite{18} implemented AND and OR logic gates using reaction-diffusion systems, where the signals are programmed by chemical waves. It was the first empirical realization of chemical logical gates. In 1997, Magnasco \cite{19} showed that logic gates can be constructed and executed in the chemical kinetics of homogeneous solutions. It has been proved that such constructions have computational power equivalent to Turing machine. Adamatzky and Lacy Costello \cite{20} experimentally realized the Chemical XOR gate by following the same approach of Toth and Showlter in 2002. Further, Górecki et al. \cite{21}  constructed the chemical counters for an information processing in the excitable reaction-diffusion systems. 

It is one of the most promising new areas of research. Some of the difficulties can be caused by connecting several gates together for advanced computation. Thus, recently, researchers started focusing on native chemical computation, i.e. without reaction-diffusion systems. In 1994, Adleman \cite{22} proposed the concept of DNA computing and solved the Hamiltonian path problem by changing DNA strands. In 2009, Benenson \cite{23} reviewed biological measurement tools for new generation biocomputers. Prohaska et al. \cite{24} studied protein domain using chromatin computation and introduced chromatin as a powerful machine for chemical computation and information processing. In 2012, Bryant \cite{25}
proved chromatin computer as computationally universal and used to solve an example of combinatorial problem.

The structures of DNA and RNA are represented using the concept of classical automata theory \cite{26} \cite{27}. Krasinski et al.  \cite{28} represented the restricted enzyme in DNA with pushdown automata in circular mode. Khrennikov and Yurova \cite{29} modeled the behavior of protein structures using classical automata theory and investigated the resemblance between the quantum systems and modeling behavior of proteins. Bhatia and Kumar \cite{30} modeled ribonucleic acid (RNA) secondary structures using two-way quantum finite automata, which are halted in linear time.  Recently, Duenas-Diez and Perez-Mercader designed molecular machines for chemical reactions. The native chemical computation has been implemented beyond the scope of logic gates i.e. with chemical automata \cite{12}. It has been demonstrated that chemical reactions transcribed in formal languages, can be recognized by Turing machine without using biochemistry \cite{11}.

\section{Preliminaries}
In this section, some preliminaries are given. We assume that the reader is familiar with the classical automata theory and the concept of quantum computing; otherwise, reader can refer to the theory of automata \cite{3}, quantum information and computation \cite{2} \cite{7}. Linear algebra is inherited from quantum mechanics to describe the field of quantum computation. It is a crucial mathematical tool. It allows us to represent the quantum operations and quantum states by matrices and vectors, respectively, that obey the rules of linear algebra. The following are the notions of linear algebra used in quantum computational theory:
\begin{itemize}
	\item Vector space (\textit{V}) \cite{7}: A vector space (\textit{V}) is defined over the field $\mathbb{F}$ of complex numbers $\mathbb{C}$ consisting of a non-empty set of
vectors, satisfying the following operations:
\begin{itemize}
\item Addition: If two vectors $\ket{a}$ and $\ket{b}$ belong to \textit{V}, then $\ket{a}+\ket{b} \in V$.
\item Multiplication by a scalar: If $\ket{a}$ belongs to \textit{V}, then $\lambda\ket{a} \in V$, where $\lambda \in \mathbb{C}$. 
\end{itemize}

	\item Dirac notation \cite{2}: In quantum mechanics, the Dirac notation is one of the most peculiarities of linear algebra. The combination of vertical and angle bars ($\ket{} \bra{}$) is used  to unfold quantum states. It provides an inner product of any two vectors.  The bra $\bra{b}$ and ket $\ket{a}$ represent the row vector and column vector, respectively.
	\begin{equation}
	\ket{a}=\begin{bmatrix}
	\alpha_1\\
	\alpha_2\\
	\alpha_3\\
	\end{bmatrix}, \bra{b}=\begin{bmatrix}
	\beta_1^* & \beta_2^* & \beta_3^*
	\end{bmatrix}, \ket{a}\bra{b}=\begin{bmatrix}
	\alpha_1\beta_1^* &  \alpha_1\beta_2^*  & \alpha_1\beta_3^*\\
	\alpha_2\beta_1^* &  \alpha_2\beta_2^*  & \alpha_2\beta_3^*\\
	\alpha_3\beta_1^* &  \alpha_3\beta_2^*  & \alpha_3\beta_3^*\\
	\end{bmatrix} \end{equation}
	
	where $\beta_i^*$ indicates the complex conjugate of complex number $\alpha_i$.

	\item Quantum bit \cite{10}: A quantum bit (qubit) is a unit vector defined over complex vector space $\mathbb{C}^2$. In general, it is represented as a superposition of two basis states labeled $\ket{0}$ and $\ket{1}$. 
	\begin{equation}
	\ket{\phi}=\alpha\ket{0}+\beta\ket{1} 
	\end{equation}
 The probability of state occurrence $\ket{0}$ is $|\alpha|^2$ and $\ket{1}$ is $|\beta|^2$. It satisfies that  $|\alpha|^2+ |\beta|^2=1$. The two complex amplitudes ($\alpha$ and $\beta$) are represented by one qubit. Thus, $2^n$ complex amplitudes can be represented by $\it n$ qubits. 
 
	\item Quantum state \cite{2}: A quantum state $\ket{\psi}$ is defined as a superposition of classical states, 
	\begin{equation}
	\ket{\psi}=\alpha_1\ket{w_1}+\alpha_2\ket{w_2}+...+\alpha_n\ket{w_n} \end{equation}
	where $\alpha_i's$ are complex amplitudes and $\ket{w_i}'s$ are classical states for $1 \leq i\leq n$. Therefore, a quantum state $\ket{\psi}$ can be represented as \textit{n}-dimensional column vector. 
	\begin{equation}
	\begin{bmatrix}
	\alpha_1\\
	\alpha_2\\
	...\\
	\alpha_n\\
	\end{bmatrix}
	\end{equation}

	\item Unitary transformation: In quantum mechanics, the transformation between the quantum systems must be unitary. Consider a state $\ket{\psi}$ of quantum system at time \textit{t}: $\ket{\psi}=\alpha_1\ket{w_1}+\alpha_2\ket{w_2}+...+\alpha_n\ket{w_n}$ transformed into state $\ket{\psi'}$ at time \textit{t'}: $\ket{\psi'}=\alpha_1'\ket{w_1}+\alpha_2'\ket{w_2}+...+\alpha_n'\ket{w_n}$, where complex amplitudes are associated by $\ket{\psi'(t')}=U(t'-t)\ket{\psi(t)}$, where \textit{U} denotes a time reliant unitary operator satisfies that $(U(t'-t))^*U(t'-t)=1$, and $\sum_{i=1}^{n} |\alpha_i|^2= |\alpha'_i|^2=1$ \cite{2}.

\item Hilbert space: A physical system is described by a complex vector space called Hilbert space $\mathcal{H}$ \cite{7}. It allow us to describe the basis of the quantum system. The direct sum $\braket{x|y}: \mathcal{H} \oplus \mathcal{H}\rightarrow \mathbb{C}$  or inner product $\braket{x|v}: \mathcal{H} \otimes \mathcal{H}\rightarrow \mathbb{C}$  of two subspaces, satisfying the following properties for any vectors:
	\begin{itemize}
		\item Linearity: ($\alpha\bra{x}+\beta\bra{y}$)$\ket{z}=\alpha\braket{x|z}+\beta\braket{y|z}$
		\item Symmetric property: $\braket{x|y}=\braket{y|x}$
		\item Positivity:  $\braket{x|x}\geq0$ and $\braket{x|x}=0$ iff $x=0$, where $x \in \mathcal{H}$.
	\end{itemize}
	where $x,y,z \in \mathcal{H}$ and $ \alpha, \beta \in \mathbb{C}$.

\item Quantum finite automata (QFA) \cite{34}: It is defined as a quadruple ($Q, \Sigma, s_{int}, U_{\sigma}$), where
	\begin{itemize}
		\item \textit{Q} is a set of states,
		\item $\Sigma$ is an input alphabet, 
		\item Hilbert space $\mathcal{H}$ and $s_{init} \in \mathcal{H}$ is an initial vector such that $|s_{init}|^2=1$,
		\item $\mathcal{H}_{acc} \subset \mathcal{H}$ and $P_{acc}$ is an acceptance projection operator on $\mathcal{H}_{acc}$, 
		\item $U_{\sigma}$ denotes a unitary transition matrix for each input symbol ($\sigma \in \Sigma$). 
	\end{itemize}
	
The computation procedure of QFA consists of an input string $w=\sigma_n\sigma_2...\sigma_n$. The automaton works by reading each input symbol and their respective unitary matrices are applied on the current state, starting with an initial state. The quantum language accepted by QFA is represented as a function $f_{QFA}(w)=|s_{init}U_w P_{acc}|^2$, where $U_w=U_{\sigma_1}U_{\sigma_2}...U_{\sigma_n}$. The tape head is allowed to move only in the right direction. Finally, the probability of QFA being in an acceptance state is observed; i.e. indicating whether the input string is accepted or rejected by QFA. It is also called as a real-time quantum finite automaton. 
\end{itemize}
Based on the movement of tape head, QFA is classified as one-way QFA, 1.5-way QFA, and 2QFA. In 1.5-way QFA, the tape head is permitted to move only in the right direction or can be stationary, but it cannot move towards the left direction. It has been proved that it can be designed for non-context free languages, if the input tape is circular \cite{35}. In this paper, we focused on the 2QFA model due to its more computational power than classical counterparts. 

\section{Two-way quantum Finite automata}

A quantum finite automaton (QFA) is a quantum variant of a classical finite automaton. In QFA, quantum transitions are applied on reading the input symbols from the tape \cite{5}. Two-way quantum finite automaton (2QFA) is a quantum counterpart of a two-way deterministic finite automaton (2DFA). In 2QFA, the tape head is allowed to move either in the left or right direction or can be stationary. The illustration of 2DFA is shown in Fig 1.  It consists of a read-only
input tape consisting of a sequence of cells, each one 
storing a input symbol. The read-only input tape is scanned by an input head, which is allowed to move in both  directions or can be stationary. At each step of a
computation, a finite state control should be in state from a finite set \textit{Q}.

\begin{figure}[h]
	\centering
	\includegraphics[scale=1.2]{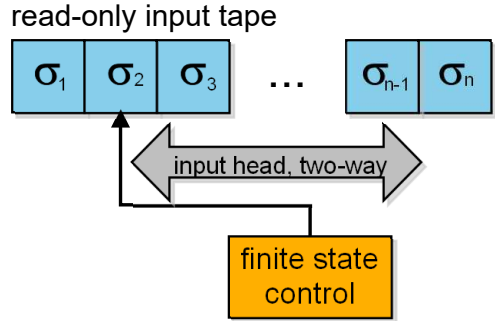}
	\caption{Schematic diagram of the “hardware” of 2DFA}
\end{figure}

\begin{defn} \emph{\cite{5}}
	A two-way quantum finite automaton is represented as sextuple $(Q, \Sigma, \delta,q_0, Q_{acc},$ $Q_{rej})$, where
	\begin{itemize}
		\item $\it Q$ is a finite set of states.
		\item $\Sigma$ is an input alphabet,
		\item Transition function $\delta$ is defined by $\delta: Q\times\Gamma\times Q\times D \rightarrow \mathbb{C}$, where $\mathbb{C}$ is a complex number,  $\Gamma=\Sigma \cup \{\#, \$\}$ and $D=\{-1, 0, +1\}$ represent the left, stationary and right direction of tape head.
\item $Q=Q_{acc}\cup Q_{rej}\cup Q_{non}$, where $Q_{non}, Q_{acc}, Q_{rej}$ represent the set of non-halting, accepting, rejecting states respectively. 	 The transition function must satisfies the following conditions:
	\end{itemize}
	\begin{enumerate}[(i)]
			\item Local probability and orthogonality condition:
			$$ \sum_{(q', d)\in Q\times D}^{\forall(q_1,\sigma_1),(q_2, \sigma_2)\in Q \times\Gamma} 	\overline{\delta(q_1,\sigma,q',d)}\delta(q_2,\sigma,q',d)=
			\left\{
			\begin{array}{ll}
			
			1~ ~ q_1=q_2\\
			0~ ~ q_1\neq q_2\\
			
			\end{array} \right \} $$
			\item First separability condition:
			$$ \sum_{q'\in Q}^{\forall(q_1,\sigma_1),(q_2, \sigma_2)\in Q \times\Gamma} 	\overline{\delta(q_1,\sigma_1,q',+1)}\delta(q_2,\sigma_2,q',0)+\overline{\delta(q_1,\sigma_1,q',0)}\delta(q_2,\sigma_2,q',-1)=0$$
			\item Second separability condition:
			$$ \sum_{q'\in Q}^{\forall(q_1,\sigma_1),(q_2, \sigma_2)\in Q \times\Gamma} 	\overline{\delta(q_1,\sigma_1,q',+1)}\delta(q_2,\sigma_2,q',-1)=0$$
			
		\end{enumerate}

\end{defn}
For each $\sigma \in \Gamma$, a 2QFA is said to be simplified, if there exists a unitary linear operator $V_\sigma$ on the inner product space such that $L_2\{Q\}\rightarrow L_2\{Q\}$. The transition function is represented as
	\begin{equation}
\delta(q,\sigma,q',d)= \left\{ \begin{array}{l}
q'V_{\sigma}q\\
0 \end{array}
\middle\vert\;
\begin{array}{@{}l@{}}
\text{if} ~ D(q')=d \\
\text{else}
\end{array}
\right\}
\end{equation}
where $q'V_{\sigma} q$ is a coefficient of $\ket{q'}$ in  $V_\sigma \ket{q}$.

Consider an input string $\it w$, written on the input tape enclosed with both end-markers such as $\#w\$$. The computation of 2QFA  is progress as follows. The tape head is above the input symbol $\sigma$ and the automaton is in any state $\it q$. Then, the state of 2QFA is changed to $q'$ with an amplitude $\delta(q,\sigma,q',d)$ and moves the tape head one cell towards right, stationary and in left direction according to $\in \{-1, 0, +1\}$. It corresponds to a unitary evolution in the inner-product space $\mathcal{H}_n$.

A computation of a 2QFA  is a chain of superpositions $c_0,c_1,c_2,....,$ where $c_0$ denotes an initial configuration. For any $c_i$, when the automaton is observed in a superposition state with an amplitude $\alpha_c$, then it has the form $U_\delta\ket{c_i}\sum_{c\in C_n}\alpha_c\ket{c_i}$, where $C_n$ represents the set of configurations. The probability associated with a configuration is calculated by absolute squares of amplitude. Superposition is said to be valid; 
if the sum of the squared moduli of their probability amplitudes is unitary.
In quantum theory, the time evolution is specified by unitary transformations. Each transition function $\delta$ prompts a transformation operator over the Hilbert space $\mathcal{H}_n$ in linear time.
$$ U_{\delta}^w \ket{q,j}=\sum_{(q',d)\in Q\times D}\delta(q,w(j),q',d)\ket{q',j+d ~ mod\lvert w \rvert}$$
for each $(q,j)\in C_{\lvert w\rvert}$, where $q\in Q, j\in Z_{\lvert w\rvert}$ and extended to $\mathcal{H}_n$ by linearity \cite{5, 31}. 

\section{Modeling of Chemical Reactions}

\begin{figure}[h]
	\centering
	\includegraphics[scale=0.7]{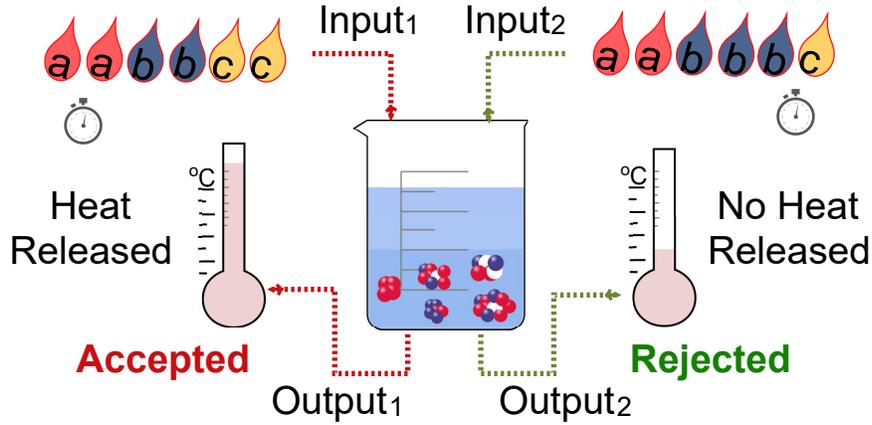}
	\caption{Representation of language recognition by chemical computation. It is reproduced from \cite{12} under the Creative Commons CCBY license.}
\end{figure}
Before, we recognize the chemical reactions using two-way quantum finite automata model, it is important to show how computational chemistry works. Fig 2. shows the illustration of language recognition by the chemical computation model. It consists of three parts: (i) a mixed container where the computation process occurs, (ii) an input translator that translates the chemical aliquots into input symbols and gives them consecutively depending upon the processing time, (3) a system to monitor the response of an automaton as a chemical criterion. Finally, the chemical computation produces well-defined chemical accept/ reject signatures for the input. For instance, if the number of \textit{a}'s and \textit{b}'s are equal in the input, then the chemical computation produces heat i.e. an input is said to be accepted. Otherwise, if no heat is released at the end of computation, then the input is said to be rejected by the system. The following are the construction of two-way quantum finite state machines of chemical reactions. 

\begin{thm}
	Two-way quantum finite automata can recognize all regular languages. 
\end{thm}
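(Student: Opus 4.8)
The plan is to exploit the classical characterization that a language is regular if and only if it is accepted by some one-way deterministic finite automaton (DFA), and then to show that any such DFA can be simulated \emph{exactly} by a 2QFA built according to Definition~1. First I would fix an arbitrary regular language $L$ and a DFA $M=(Q,\Sigma,\delta_M,q_0,F)$ recognizing it. The natural attempt is to take the 2QFA state set to contain the states of $M$ and, for each input symbol $\sigma$, to define the simplified operator $V_\sigma$ by its action on basis states $\ket{q}\mapsto\ket{\delta_M(q,\sigma)}$, assigning every non-halting state the rightward head direction $D(q')=+1$. The accepting states of $M$ would become $Q_{acc}$, and the halting behaviour at the right end-marker $\$$ would decide acceptance.

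The hard part will be unitarity. The map $q\mapsto\delta_M(q,\sigma)$ is in general not injective, so its linear extension is not a permutation and $V_\sigma$ fails to be unitary; this is precisely why one-way reversible finite automata capture only the group languages, a strict subclass of the regular languages. The key step is therefore to \emph{reversibilize} the computation, and this is exactly where two-wayness becomes essential. I would invoke (or reconstruct) the fact that every regular language is recognized by a two-way \emph{reversible} finite automaton: using the two end-markers $\#$ and $\$$ together with the ability to move the head in direction $d\in\{-1,0,+1\}$, one simulates $M$ by a machine whose global transition relation is a bijection on configurations, in the spirit of Bennett's reversible simulation, retracing and uncomputing collisions rather than discarding them. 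Since a reversible two-way automaton has transitions that permute the configuration basis, the induced operators $V_\sigma$ are unitary by construction.

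With unitarity in hand, verifying the remaining requirements is routine. Because each $V_\sigma$ is unitary and the head direction $D(q')$ is a single-valued function of the target state, the local probability and orthogonality condition and the first and second separability conditions of Definition~1 hold automatically for the simplified transition function $\delta(q,\sigma,q',d)=\langle q'|V_\sigma|q\rangle$ when $D(q')=d$ and $0$ otherwise. The induced evolution operator $U_\delta^w$ is then unitary on the configuration Hilbert space $\mathcal{H}_n$ and halts in linear time once the head reaches $\$$. Finally I would check correctness: the single-branch deterministic computation reaches an accepting configuration with amplitude $1$ exactly when $M$ accepts $w$, so the acceptance probability equals $1$ for $w\in L$ and $0$ otherwise. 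This exact recognition trivially satisfies one-sided bounded error, establishing that $L$ is recognized by a 2QFA and, since $L$ was arbitrary, that 2QFA recognize all regular languages. The main obstacle throughout is the reversibilization step; everything else reduces to checking that the constraints of Definition~1 are met.
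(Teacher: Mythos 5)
Your proposal is correct and follows essentially the same route as the paper: the paper's own proof is nothing but a citation, and the proof in the cited literature (due to Kondacs and Watrous \cite{6} --- the paper actually points to Moore and Crutchfield \cite{5}, which is the wrong source for this result) is precisely your argument of reversibilizing a one-way DFA into a two-way reversible finite automaton, so that each $V_\sigma$ acts as a permutation of the basis states and is hence unitary, with the well-formedness conditions of Definition 1 then holding automatically. The one step you leave as an invocation --- that every regular language is accepted by a two-way reversible automaton --- is the entire technical content of that theorem, but appealing to it as a known fact is no less complete than what the paper itself does, and your surrounding reasoning (why one-way reversibility fails for non-group languages, why the permutation structure yields unitarity and exact acceptance) is sound.
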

\begin{proof}
	The proof has been shown in \cite{5}.
\end{proof}

\subsection{Chemical reaction-1 consisting regular language}

For an illustrative and visual implementation, we
can choose a precipitation reaction in an aqueous medium such as
\begin{equation}
KIO_3 +AgNO_3 \rightarrow AgIO_3(s) +KNO_3
\end{equation}

If, during computation, a white precipitate of silver iodate is observed,
then the input string is said to be accepted; if the solution is clear
from precipitate, the string has been rejected because there was no reaction. Therefore, we have chosen the recipes of alphabet symbols \textit{a} for potassium iodate ($KIO_3$) and \textit{b} for silver nitrate ($AgNO_3$), quantitatively. Fig. 3 shows the chemical representation of symbols \textit{a} and \textit{b},
the bimolecular precipitation reaction \cite{11}. If the precipitate AgIO$_3$ is not presented in the solution, then the computation is said to be rejected. For example, the input string $w=aaab$ is said to be accepted due to presence of precipitate or, equally, the heat has been determined during computation. But, the input $w=aa$ is said to be rejected due to absence of precipitate, or, precisely, the heat has not been observed. The Kleene star ($\Sigma^*$) operator is a set of infinite strings of all lengths over input alphabet as well as empty string ($\epsilon$). Fig. 4 shows the corresponding theoretical 2QFA state transition graph to recognize $L_1$. 

\begin{figure}[h]
	\centering
	\includegraphics[scale=0.55]{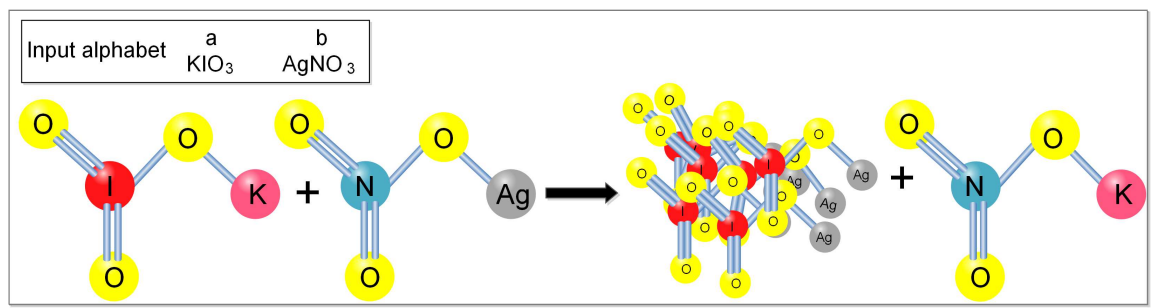}
	\caption{Illustration of acid/base reaction of $L_1$}
\end{figure}

\begin{figure}[h]
	\centering
	\includegraphics[scale=0.42]{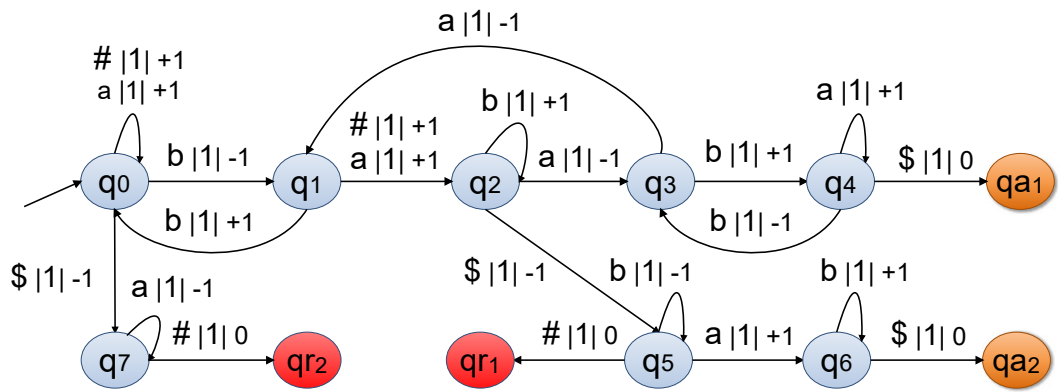}
	\caption{State transition diagram of $L_1$}
\end{figure}

\begin{thm}
A language $L_{1}=\{(a+b)^*a(a+b)^*b(a+b)^*aa^*bb^*\}$ representing precipitation reaction in Eq. (6) can be recognized by 2QFA. 
 
\end{thm}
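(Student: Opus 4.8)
The plan is to exploit the fact that $L_1$ is, by inspection, a regular language: it is presented directly as a regular expression over the alphabet $\{a,b\}$, where $aa^*=a^+$ and $bb^*=b^+$, so that every accepted string terminates in a block $a^+b^+$. Consequently the statement is an immediate corollary of Theorem 1, which guarantees that two-way quantum finite automata recognize all regular languages. To match the explicit transition diagram of Fig. 4, however, I would also exhibit a concrete deterministic 2QFA, since the purpose of this section is to display the machine rather than merely to assert its existence.

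First I would build a deterministic finite automaton $M=(Q,\Sigma,\delta_M,q_0,F)$ for $L_1$. Scanning left to right, $M$ uses its control state to record how far the input has progressed through the pattern: it must witness an occurrence of $a$, then a later occurrence of $b$, and finally confirm that the suffix lies in $a^+b^+$. Because the trailing block forces a commitment, $M$ must additionally track whether, having entered the trailing run of $a$'s, it has begun the trailing run of $b$'s, rejecting any later $a$ once that run of $b$'s has started. Laying out these states and their right-moving transitions is routine and yields a finite, explicitly drawable automaton.

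Next I would lift $M$ to a 2QFA in the sense of Definition 1 by embedding each deterministic transition of $\delta_M$ as an amplitude-one entry of $\delta$ with head direction $d=+1$, assigning amplitude zero to every other pair $(q',d)\in Q\times D$. The end-markers $\#$ and $\$$ are handled by the usual convention, and the sets $Q_{acc},Q_{rej}$ are taken to be the images of $F$ and $Q\setminus F$. One then checks the three requirements of Definition 1. Since the head always moves right, $\delta(\cdot,\cdot,\cdot,0)$ and $\delta(\cdot,\cdot,\cdot,-1)$ vanish identically, so both separability conditions hold vacuously; the local probability and orthogonality condition then reduces to the single demand that the induced state map be injective for each symbol.

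The genuine obstacle is exactly this injectivity, that is, reversibility: an arbitrary DFA transition function need not be one-to-one, so the naive amplitude-one assignment can violate the orthogonality condition. The remedy is to replace $M$ by a reversible automaton for the same language, padding with ancillary bookkeeping states (and, where needed, exploiting the two-way head motion) so that each $V_\sigma$ becomes a permutation, hence unitary. This reversibilization is precisely the content supplied by the proof of Theorem 1 cited from \cite{5}, so the cleanest route is to invoke Theorem 1 directly, with the explicit construction serving only to realize the diagram of Fig. 4. I expect the reversibilization step to be the only part requiring care; the remaining verifications are bookkeeping.
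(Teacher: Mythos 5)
Your proposal is correct and takes essentially the same route as the paper: the paper's proof also constructs an explicit deterministic, reversible two-way machine (its Table 1) whose per-symbol transition matrices are $0$/$1$ permutations---hence unitary, since each row and column has exactly one entry $1$---with two-way head motion via the direction function $D$ supplying exactly the reversibility you identify as the crux. The only real difference is completeness, not strategy: the paper writes out the twelve-state transition table realizing Fig.~4, whereas you defer that bookkeeping to the reversibilization underlying Theorem~1.
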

\begin{proof}
The idea of this proof is as follows. The initial state $q_0$ reads a right-end marker $\#$ and moves the head towards the right direction. If there is no occurrence of symbol b, then it shows no precipitate,
and the input is said to be rejected by the 2QFA. Similarly, on reading the symbol \textit{b}, the state $q_0$ is changed into $q_1$. If there is no occurrence of symbol \textit{a}, then the state is transformed into rejecting state $q_{r_1}$. If the input string $w \in L_1$ contains at
least one \textit{a} and one \textit{b}, then silver iodate is present during computation and it is said to be recognized by 2QFA.  A 2QFA for $L_1$ is defined as follows: $$M_{2QFA}=(Q, \Sigma, q_0, Q_{acc}, Q_{rej}, \delta ),$$ where
\begin{itemize}
 \item[--] $Q=\{q_0, q_1, q_2, q_3, q_4, q_5, q_6, q_7, q_{a_1}, q_{a_2}, q_{r_1}, q_{r_2}\},$ where $q_0$ and $q_2$ are used to move the head towards the \$ on reading \textit{a}'s and \textit{b}'s respectively. The states $q_1$ and $q_3$ are used to confirm the last symbol read by head is \textit{a} and \textit{b}, respectively. .
	\item[--] $ \Sigma=\{a, b\}, q_0$ is an initial state, $Q_{acc}=\{q_{a_1}, q_{a_2} \}$ and $Q_{rej}=\{q_{r_1}, q_{r_2}\}$.
	\item[--] The specification of transition functions are given in Table 1. 
\end{itemize}   
It can be noted that in 2QFA where transition matrices consist 0 and 1, i.e. basically a two-way reversible finite automata (2RFA). Therefore, 2QFA can be designed for all the languages accepted by 2RFA. In transition matrix, each column and row have exactly only one entry 1. Hence, the dot product of any two rows is equal to zero. It is known that the language recognition power of 2RFA is an equivalent to 2DFA.
\begin{table}[!ht]
\centering
\caption{Details of the transition functions and head function for $L_1$}
\begin{tabular}{ |p{2.6cm}|p{2.6cm}|p{2.5cm}|p{2.5cm}|}
	
	\hline
	$V_{\#}\ket{q_0}=\ket{q_0}$ & $V_{a}\ket{q_0}=\ket{q_0}$ & $V_{b}\ket{q_0}=\ket{q_1}$ & $ V_{\#}\ket{q_1}=\ket{q_2} $ \\
	\hline
	$V_{\$}\ket{q_0}=\ket{q_7}$ & $V_{a}\ket{q_1}=\ket{q_2}$ & $V_{b}\ket{q_1}=\ket{q_2}$ & $ V_{b}\ket{q_2}=\ket{q_2} $ \\
	\hline
	 $V_{\$}\ket{q_7}=\ket{q_{r_2}}$ & $V_{a}\ket{q_7}=\ket{q_7}$ & $V_{a}\ket{q_2}=\ket{q_3}$ & $ V_{a}\ket{q_3}=\ket{q_1}$ \\
	\hline
	$V_{\$}\ket{q_2}=\ket{q_5}$ & $V_{a}\ket{q_4}=\ket{q_4}$ & $V_{b}\ket{q_3}=\ket{q_4}$ & $ V_{b}\ket{q_4}=\ket{q_3}$ \\
	\hline
	$V_{\#}\ket{q_5}=\ket{q_{r_1}}$ & $V_{a}\ket{q_5}=\ket{q_6}$ & $V_{b}\ket{q_5}=\ket{q_5}$ & $ V_{b}\ket{q_6}=\ket{q_6} $ \\
	\hline
	$V_{\$}\ket{q_4}=\ket{q_{a_1}}$ & $V_{a}\ket{q_6}=\ket{q_{a_2}}$ & & \\
	\hline
	\multicolumn{4}{|l|}{Head Functions:} \\
	\hline
	\multicolumn{4}{|c|}{$D(q_0)= (+1), D(q_1)= (-1), D(q_2)= (+1),
		D(q_3)= (-1),$} \\
	
	\multicolumn{4}{|c|}{$D(q_4)= (+1), D(q_5)= (-1), D(q_6)= (+1), D(q_7)= (-1),$} \\
	
	\multicolumn{4}{|c|}{$D(q_{a_1})=D(q_{a_2})=(0), D(q_{r_1})=D(q_{r_2})= (0)$} \\
	\hline
	
\end{tabular}
\end{table}

\end{proof}
\subsection{Chemical reaction-2 consisting context-free language}

Next, we have considered the context-free language from Chomsky hierarchy satisfying the balanced chemical reaction between NaOH and malonic acid is as follows:
\begin{equation}
H_2C_3H_2O_4 + 2NaOH \rightarrow Na_2C_3H_2O_4 + 2H_2O
\end{equation}
\begin{figure}[!ht]
	\centering
	\includegraphics[scale=0.5]{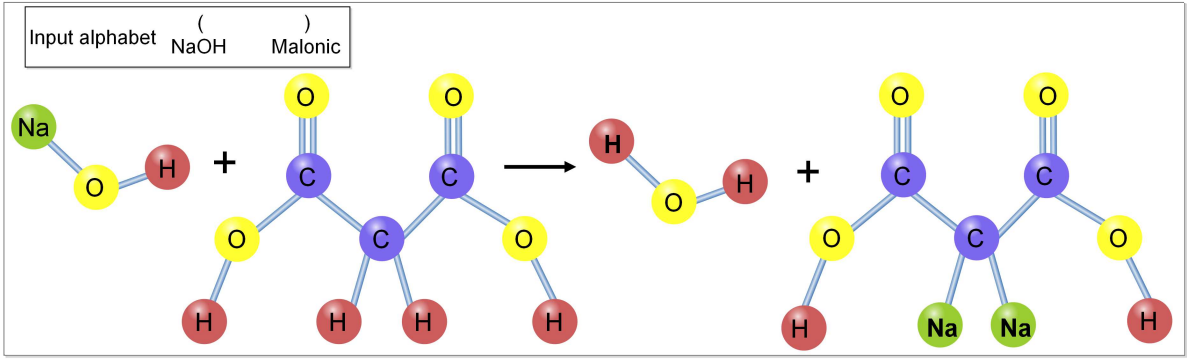}
	\caption{Illustration of the acid/base reaction of $L_2$}
\end{figure}
The language generated by the above-mentioned chemical reaction is $L_{2}$ consisting Dyck language of all words with balanced parentheses. 2QFA is designed for $L_2$ is as follows:

\begin{figure}[!ht]
	\centering
	\includegraphics[scale=0.56]{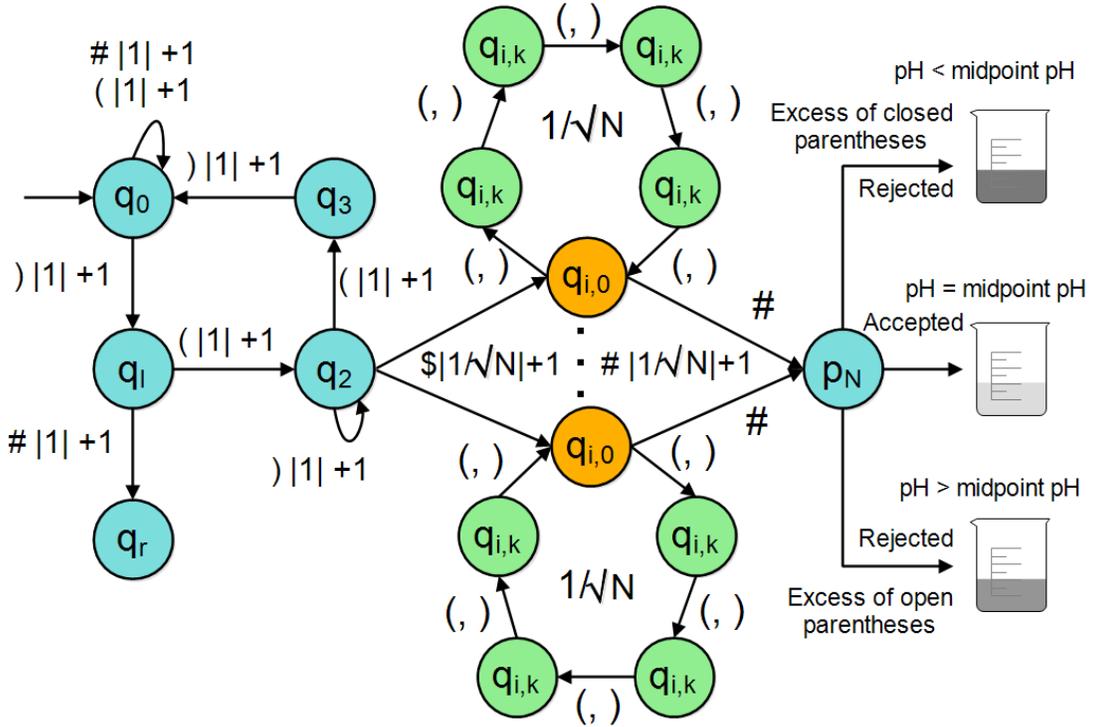}
	\caption{State transition diagram of $L_2$}
\end{figure}

\begin{thm}
A language $L_{2}$ consisting Dyck language of all words with balanced parentheses can be recognized by 2QFA with probability 1, otherwise rejected with probability at least $1-\dfrac{1}{N}$, where \textit{N} is any positive number. 
\end{thm}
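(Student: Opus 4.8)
The plan is to adapt the quantum-interference construction that Kondacs and Watrous \cite{6} used to recognize $\{a^n b^n \mid n\ge 1\}$ in linear time, so that a single rightward sweep tests the two defining properties of a balanced-parentheses word: that the total number of opening and closing symbols agree, and that no proper prefix contains more closing than opening symbols. First I would fix the number of interfering branches to be $N$ and arrange the transition function $\delta$ so that, on reading the left end-marker $\#$, the start state $q_0$ is sent by a discrete Fourier transform to the equal superposition $\frac{1}{\sqrt N}\sum_{j=1}^{N}\ket{p_j}$. On branch $p_j$ the head moves right one cell per step across an opening symbol, and passes through a short $j$-dependent cycle of internal states across a closing symbol, so that when branch $p_j$ arrives at the right end-marker $\$$ in a common configuration it carries the relative phase $\omega_N^{\,j\,h(w)}$, where $\omega_N=e^{2\pi i/N}$ and $h(w)$ is the number of opening minus the number of closing symbols. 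The transition on $\$$ then applies the inverse Fourier transform, delivering to the accepting register the amplitude $\frac{1}{N}\sum_{j=1}^{N}\omega_N^{\,j\,h(w)}$.

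Next I would read off the acceptance and rejection probabilities from this amplitude. When $w$ is balanced, $h(w)=0$, every branch carries phase $1$, and the branches add constructively to amplitude $1$, so $w$ is accepted with probability $1$. When $w$ is unbalanced, $h(w)\neq 0$ and the geometric sum collapses, so that the weight remaining in the accepting register is suppressed to at most $1/N$ and $w$ is rejected with probability at least $1-1/N$, giving the one-sided bounded error asserted in the statement. The nesting requirement, which the count alone cannot detect, I would enforce by a reversible deterministic track carried along each branch: the moment the running height would fall below zero the branch is diverted into a rejecting state, so any word possessing a prefix with a closing symbol ahead of its matching opening symbol is rejected irrespective of its overall balance. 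At every step the matrices $V_\sigma$ together with the head-direction assignment $D(\cdot)$ must be verified against the local probability, first-separability, and second-separability conditions of Definition~1, which is what certifies that the global evolution $U_\delta$ is unitary and that the machine halts after a number of steps linear in $|w|$.

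The hardest part will be reconciling the two checks inside one unitary sweep. Pure count-interference certifies only $h(w)=0$ and is blind to count-balanced but mis-nested inputs, so the prefix-tracking track must be interleaved with the interfering branches without violating the orthogonality demanded by condition (i); keeping the composite transition matrices unitary while they simultaneously rotate the Fourier register and advance the height counter is the delicate step. A secondary difficulty is aliasing: when $h(w)$ is a nonzero multiple of $N$ the Fourier sum revives to $1$ and would spuriously accept, so securing the bound uniformly in $|w|$ requires, as in the original construction, eliminating out-of-phase branches at intermediate cells rather than relying solely on the final measurement at $\$$. Once the prefix track and the branch dynamics are made jointly unitary, the stated probabilities follow directly from the Fourier-cancellation estimate.
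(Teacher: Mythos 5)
Your interference mechanism is not the one the paper uses, and it breaks exactly where you yourself suspect it might. You let branch $j$ accumulate a \emph{phase} $\omega_N^{j\,h(w)}$ and undo it with an inverse Fourier transform at the right end-marker, so the accepting amplitude is $\frac{1}{N}\sum_{j=1}^{N}\omega_N^{j\,h(w)}$. This sum equals $1$ whenever $h(w)\equiv 0 \pmod N$, not only when $h(w)=0$: for any fixed $N$ the input $(^{N}$, or $(^{N+k}\,)^{k}$, has running height that never dips below zero (so your prefix guard never fires) and is accepted with probability $1$, violating the claimed rejection bound $1-\frac1N$. Your proposed repair --- ``eliminating out-of-phase branches at intermediate cells \ldots as in the original construction'' --- is not a mechanism, and it is not what Kondacs--Watrous or the paper actually do. The paper (Table 2) encodes the counts in \emph{arrival time}, not phase: branch $i$ idles $i$ stationary steps on each opening symbol ($V_{(}\ket{q_{i,0}}=\ket{q_{i,i}}$ with the countdown $q_{i,j}\to q_{i,j-1}$) and $N-i+1$ steps on each closing symbol, so branches $i\neq i'$ reach \$ at times differing by $(i-i')\,(n_{(}-n_{)})$. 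If the counts agree, all $N$ branches arrive simultaneously and the transform $V_{\$}\ket{w_{i,0}}=\frac{1}{\sqrt N}\sum_k e^{2\pi i ki/N}\ket{p_k}$ sends them constructively into the single accepting state $p_N$; if the counts disagree, no two branches are ever at \$ together, each one passes through the transform alone and contributes $\frac{1}{N^2}$ to acceptance, for a total of at most $\frac1N$ --- uniformly over all inputs, with no mod-$N$ wraparound. Timing desynchronization, not phase cancellation, is the idea your proof is missing, and it is what makes the stated probabilities hold for every unbalanced word.

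Second, your guard for the nesting condition does not exist as a finite-state object. A ``reversible deterministic track'' that diverts a branch ``the moment the running height would fall below zero'' must distinguish height $0$ from height $1$ after arbitrarily long prefixes; the property ``every prefix has at least as many opening as closing symbols'' is non-regular (intersecting it with $(^{*})^{*}$ gives $\{(^{m})^{n}\mid n\le m\}$), so no finite set of states, however cleverly interleaved with the Fourier register, can implement that track deterministically. To be fair, the paper's own construction is weak on this very point --- its first phase checks only the first symbol, the last symbol and block alternation, and its interference stage certifies only equality of counts --- but it does give a concrete, unitary, finite-state realization of the count test achieving probabilities $1$ and $1-\frac1N$, which is precisely the content of the theorem and precisely what your phase-based sketch fails to secure.
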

\begin{proof}
The idea of this proof is as follows. It consists of three phases. First, the initial state $q_0$ reads a first symbol and both heads start moving towards the right-end marker \$. If the input string starts with  closed parentheses, then it is said to be rejected. On reading the left-end marker $\#$, the computation is split into \textit{N} paths, denoted by $q_{1, 0}, q_{2, 0}, ..., q_{N, 0}$. Each path possesses equal amplitude $\dfrac{1}{\sqrt{N}}$. Along the \textit{N }different paths, each path moves deterministically to the right-end marker $\$$.
Each computational path keep track of the open parentheses with respect to the closed parentheses. If at the end of computation, the excess of open parentheses is observed, then it is said to be rejected. It means pH value is greater than midpoint pH and intermediate gray tone is observed. Secondly, if there is an excess of closed parentheses, then the darkest gray tone is observed i.e. pH value is less than midpoint pH. It is said to be rejected by 2QFA with probability $1-\dfrac{1}{N}$. If there is a balanced occurrence of open and closed parentheses, the input string is said to be accepted with probability 1. Hence, pH value is equal to midpoint pH and the lightest gray tone is observed at the end of computation. A 2QFA for $L_2$ is defined as follows: $$M_{2QFA}=(Q, \Sigma, q_0, Q_{acc}, Q_{rej}, \delta ),$$ where
\begin{itemize}
\item[--] $Q=\{q_0, q_1, q_2, q_3\} \cup \{q_{i, j} \mid 1 \leq i \leq N, 0 \leq j \leq \text{max}(i, N-i+1)\} \cup \{p_k \mid 1 \leq k \leq N\}\cup $ $\{s_{i, 0}, w_{i, 0}, r_{i, 0} \mid 1 \leq i \leq N\} \cup \{q_{acc}, q_{rej}\}$, where $q_1$ is used to check whether the first symbol is an open parentheses or not, $q_2$ and $q_3$ are used to traverse the input string.
	\item[--] $ \Sigma=\{(, )\}, q_0$ is an initial state, $Q_{acc}=\{p_N\}$ and $Q_{rej}=\{q_r\} \cup \{p_k \mid 1 \leq k < N\} \cup$ $\{r_{i, 0} \mid 1 \leq i \leq N\}$.
	\item[--] The specification of transition functions is given in Table 2.
\end{itemize}   
\end{proof}

\begin{table} [!ht]
\centering
\caption{Details of the transition functions and head function for $L_2$}
\begin{tabular}{|p{8.25cm} p{8.25cm}|}
	
	\hline
	\multicolumn{2}{|c|}{$V_{\#}\ket{q_0}=\ket{q_0}, V_{(}\ket{q_0}=\ket{q_0}, V_{)}\ket{q_0}=\ket{q_1}, V_{\#}\ket{q_1}=\ket{q_r}
$}\\
	\hline
	\multicolumn{2}{|c|}{$V_{(}\ket{q_1}=\ket{q_2}, V_{)}\ket{q_2}=\ket{q_2}, V_{(}\ket{q_2}=\ket{q_3}, V_{)}\ket{q_3}=\ket{q_0}$}\\
	\hline
\multicolumn{2}{|c|}{$V_{\#}\ket{q_2}=\dfrac{1}{\sqrt{N}} \sum_{i=1}^{N}{\ket{q_{i, 0}}}$}	 \\
	\hline
	\multicolumn{2}{|c|}{$V_{(}\ket{q_{i, 0}}=\ket{q_{i, i}},~ ~ V_{)}\ket{q_{i, 0}}=\ket{q_{i, N-i+1}}, \text{for} ~ 1 \leq i \leq N, V_{(}\ket{q_{i, j}}=\ket{q_{i, j-1}}, \text{for} ~ 1 \leq j \leq i$} \\
	\hline
	 \multicolumn{2}{|c|}{$V_{)}\ket{q_{i, j}}=\ket{q_{i, j-1}}, \text{for} ~ 1 \leq j \leq N-i+1, 1 \leq i \leq N$} \\
	\hline
	\multicolumn{2}{|c|}{$V_{\$}\ket{q_{i, 0}}= \ket{s_{i,0}}, V_{)}\ket{s_{i, 0}}= \ket{w_{i,0}}, V_{(}\ket{s_{i, 0}}= \ket{r_{i, 0}}, \text{for} ~1 \leq i \leq N $}\\
	\hline
	\multicolumn{2}{|c|}{$V_{\$}\ket{w_{i, 0}}= \dfrac{1}{\sqrt{N}} \sum_{k=1}^{N}{exp\left(\dfrac{2\pi i}{N}ki \right) \ket{p_k}}, \text{for} ~1 \leq i \leq N$}  \\
	\hline
	\multicolumn{2}{|l|}{Head Functions:} \\
	\hline
	\multicolumn{2}{|c|}{$D(q_0)= (+1), D(q_1)= (-1), D(q_2)= (+1), D(q_3)= (-1), D(q_{i, 0})= (+1),  \text{for} ~1 \leq i \leq N $}\\
	
	\multicolumn{2}{|c|}{$D(q_{i, j})= (0), \text{for} ~1 \leq i \leq N,j \neq 0, D(q_{r})= (0),  D(p_{k})= (0), \text{for} ~1 \leq k \leq N$} \\
	
	\multicolumn{2}{|c|}{$D(s_{i, 0})= (-1), D(w_{i, 0})= (+1), D(r_{i, 0})= (0), \text{for} ~1 \leq i \leq N$} \\

	\hline
	
\end{tabular}
\end{table}
\subsection{Chemical reaction-3 consisting context-sensitive language}

To implement a chemical 2QFA for context-sensitive language, we have used  Belousov-Zhabotinsky (BZ) reaction network for the non-linear oscillatory
chemistry \cite{11}, which consists temporal oscillation in the sodium bromate and malonic acid system \cite{33} as
\begin{equation}
3BrO_3- + 5CH_2(COOH)_2 + 3H^+ \rightarrow 3BrCH(COOH)_2 + 4CO_2 + 2HCOOH + 5H_2O
\end{equation}
\begin{figure}[!ht]
	\centering
	\includegraphics[scale=0.35]{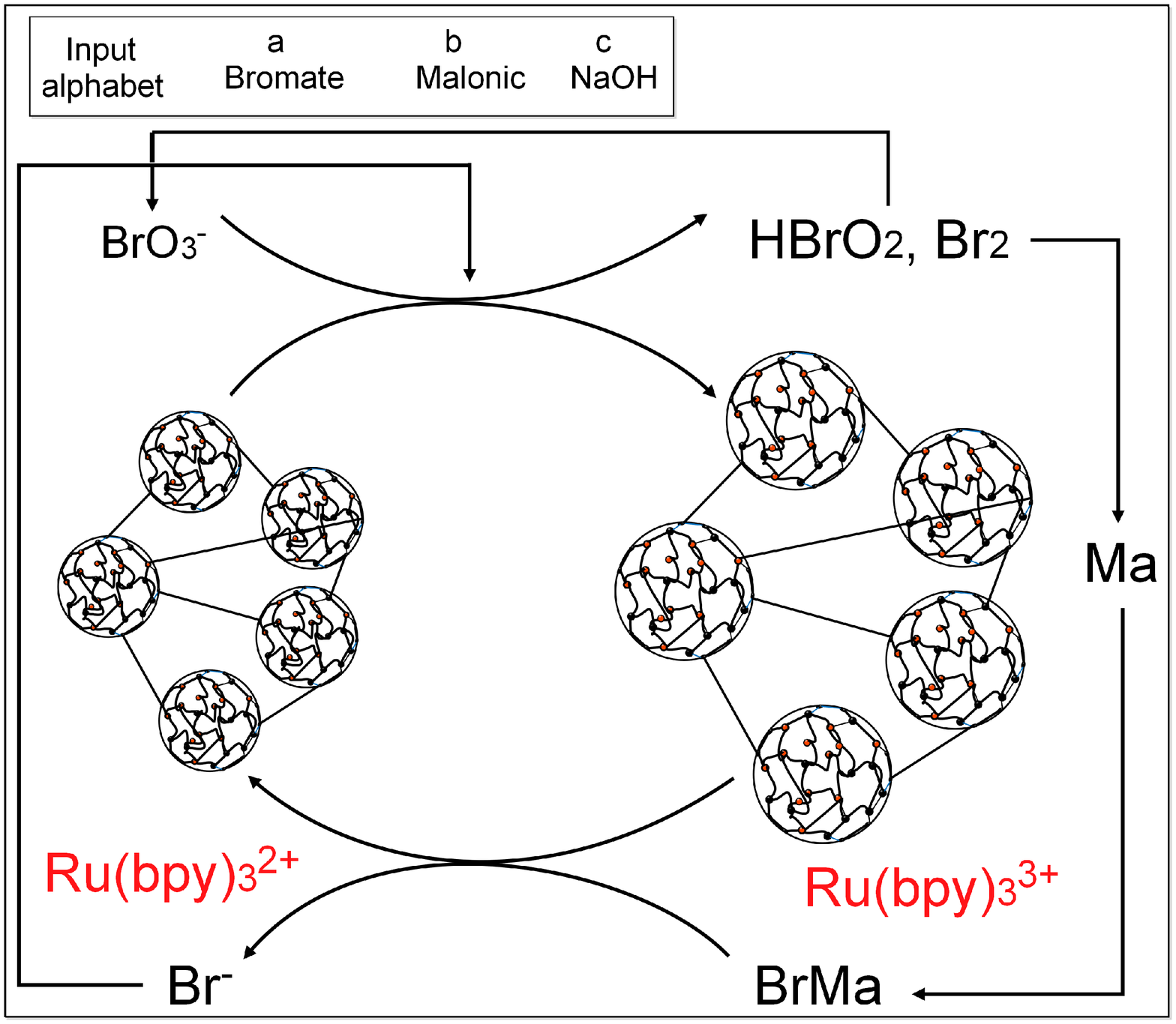}
	\caption{Illustration of the acid/base reaction of $L_3$}
\end{figure}

In 2019, Duenas-Diez and Perez-Mercader \cite{11} designed chemical Turing machine for BZ reaction network. The chemical reaction is fed sequentially to the reactor as $\{(BrO_3^-)^n (MA)^n (NaOH)^n\}$, where \textit{n}$>$0. It is transcribed in formal language as $L_{3}=\{a^nb^nc^n | n > 0\}$. The symbol \textit{a} is interpreted as a fraction of sodium bromate, \textit{b} is used for
malonic acid and symbol \textit{c} is transcribed as a quantity of NaOH.  
It is known that $L_3$ is a context-sensitive language and cannot be recognized by finite automata or pushdown automata with a stack. Although, it can be recognized by two-stack PDA. We have shown that $L_3$ can be recognized by 2QFA  without using any external aid.

\begin{thm}
A language $L_{3}=\{a^nb^nc^n | n > 0\}$ can be recognized by 2QFA in linear time. 
For a language $L_{3}=\{a^nb^nc^n | n > 0\}$ and for arbitrary \textit{N}-computational paths, there exists a 2QFA such that for $w \in L_3$, it accepts \textit{w} with bounded error $\epsilon $ and rejects $w \notin L_3$ with probability at least $1-\dfrac{1}{N}$.
\end{thm}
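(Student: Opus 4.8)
The plan is to follow the Kondacs--Watrous interference technique \cite{6} that already realizes this language, organizing the computation of $M_{2QFA}$ into the same three-phase structure used for $L_2$ above. First I would verify the regular skeleton: a deterministic, reversible prefix of the automaton (states $q_0,q_1,q_2,q_3$ with rightward head motion) reads the tape left to right and checks that the input has the block form $a^+b^+c^+$, immediately sending any string that violates this order into a rejecting state. By Theorem 1 this part is unproblematic, since $a^+b^+c^+$ is regular and the transition matrix restricted to these states is a permutation matrix and hence trivially unitary.

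Second, upon reaching the delimiting marker I would split the surviving computation into $N$ equal-amplitude branches, applying $V_{\#}\ket{q_2}=\frac{1}{\sqrt{N}}\sum_{i=1}^{N}\ket{q_{i,0}}$ exactly as in Table 2. Each branch then sweeps the tape deterministically while maintaining path-indexed counters $q_{i,j}$ that advance on $a$'s and retreat on $b$'s (to test $n_a=n_b$) and, in a second counting stage, advance on $b$'s and retreat on $c$'s (to test $n_b=n_c$). The crucial design requirement is that the total number of steps, or equivalently the accumulated phase, that branch $i$ acquires before landing on the right marker be an affine function of the two differences $n_a-n_b$ and $n_b-n_c$, with a branch-dependent coefficient (the analogue of the $i$ versus $N-i+1$ weighting in Table 2) so that the $N$ branches fan out in phase space unless both differences vanish.

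Third, on the right end-marker I would apply the inverse discrete Fourier transform $V_{\$}\ket{w_{i,0}}=\frac{1}{\sqrt{N}}\sum_{k=1}^{N}\exp\!\big(\frac{2\pi i}{N}ki\big)\ket{p_k}$, mirroring the $L_2$ construction, so that the $N$ branches interfere. When $n_a=n_b=n_c$ every branch arrives in the identical state $s_{i,0}\mapsto w_{i,0}$ with the same phase, and the Fourier sum concentrates all amplitude on the single accepting state $p_N$, yielding acceptance with the stated one-sided bounded error $\epsilon$. When either equality fails, the branches carry mismatched phases, the amplitude on $p_N$ cancels, and the residual mass is distributed over the rejecting states $\{p_k : 1\le k<N\}\cup\{r_{i,0}\}$, giving rejection with probability at least $1-\frac{1}{N}$. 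I would close the argument by checking the three conditions of Definition 1 (local orthonormality and the two separability conditions) for each $V_\sigma$, which reduces to verifying that the counting operators are permutations on the path registers and that the branching and Fourier maps are isometries, together with the observation that each branch performs a bounded number of sweeps and therefore halts in linear time.

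The main obstacle I anticipate is the two-equality coupling in the middle phase: a single Fourier interference step cleanly certifies one linear relation, but $a^nb^nc^n$ requires $n_a=n_b$ and $n_b=n_c$ to hold \emph{simultaneously}. I expect the real work to lie in choosing the per-branch counter increments so that the combined phase vanishes for all branches exactly on the intersection of the two conditions, while guaranteeing that at least a $1-\frac{1}{N}$ fraction of the Fourier mass misses $p_N$ whenever even one of them is violated, and doing all of this without breaking the unitarity of any $V_\sigma$.
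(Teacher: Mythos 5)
Your overall framework (form check for $a^+b^+c^+$, an $N$-way equal-amplitude split, path-dependent stationary delays, and a QFT at the end-marker) is the right Kondacs--Watrous machinery, and it is the same machinery the paper invokes. But you diverge from the paper at exactly the step that carries the proof, and your version of that step has a genuine gap. You propose a \emph{single} $N$-way split whose arrival times are an affine function of the two differences $n_a-n_b$ and $n_b-n_c$ with branch-dependent coefficients, followed by one QFT, and you yourself flag that you do not know how to choose those coefficients. This is not a detail that can be deferred: if the coefficients of both differences are, as in Table 2, linear in the branch index $i$, then the arrival times of all branches coincide whenever one fixed linear combination $\alpha(n_a-n_b)+\gamma(n_b-n_c)$ vanishes, so infinitely many strings outside $L_3$ (e.g.\ those with $n_a-n_b=\gamma t$ and $n_b-n_c=-\alpha t$ for integer $t$) would be accepted with probability $1$. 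Escaping this forces coefficients that are non-proportional functions of $i$ (say $i$ and $i^2$), and then pairs of branches $i\neq i'$ with $i+i'=-(n_a-n_b)/(n_b-n_c)$ still arrive simultaneously and interfere, so the acceptance probability on bad inputs is only bounded by $2/N$ rather than $1/N$; recovering the stated bound requires a collision analysis you have not supplied. The paper sidesteps this difficulty entirely: following Kondacs--Watrous, it runs \emph{two separate} equality checks --- one sub-computation verifies $n_a=n_b$, the other verifies $n_b=n_c$ --- each with its own $N$-way split, its own timing, and its own QFT, so that each interference step only ever certifies a single linear relation, for which the $1-\frac{1}{N}$ argument is standard.

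Two smaller points. First, your description of the middle phase as ``counters $q_{i,j}$ that advance on $a$'s and retreat on $b$'s'' cannot be meant literally: a 2QFA has finitely many states, so the index $j$ cannot track an unbounded difference; in Table 2, $j$ is a bounded countdown implementing the path-dependent \emph{delay}, and it is only the arrival time at the end-marker that encodes the counts (your subsequent sentence about total step counts gets this right). Second, you have the interference backwards in the rejection case: when the branches arrive at different times, nothing cancels --- each branch's QFT output is measured separately, leaving total probability $1/N$ on acceptance; cancellation (of the amplitudes on $p_k$, $k\neq N$) is what happens in the \emph{acceptance} case, which is why membership in $L_3$ is accepted with probability $1$ rather than merely $1-\epsilon$.
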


\begin{figure}[!ht]
	\centering
	\includegraphics[scale=0.7]{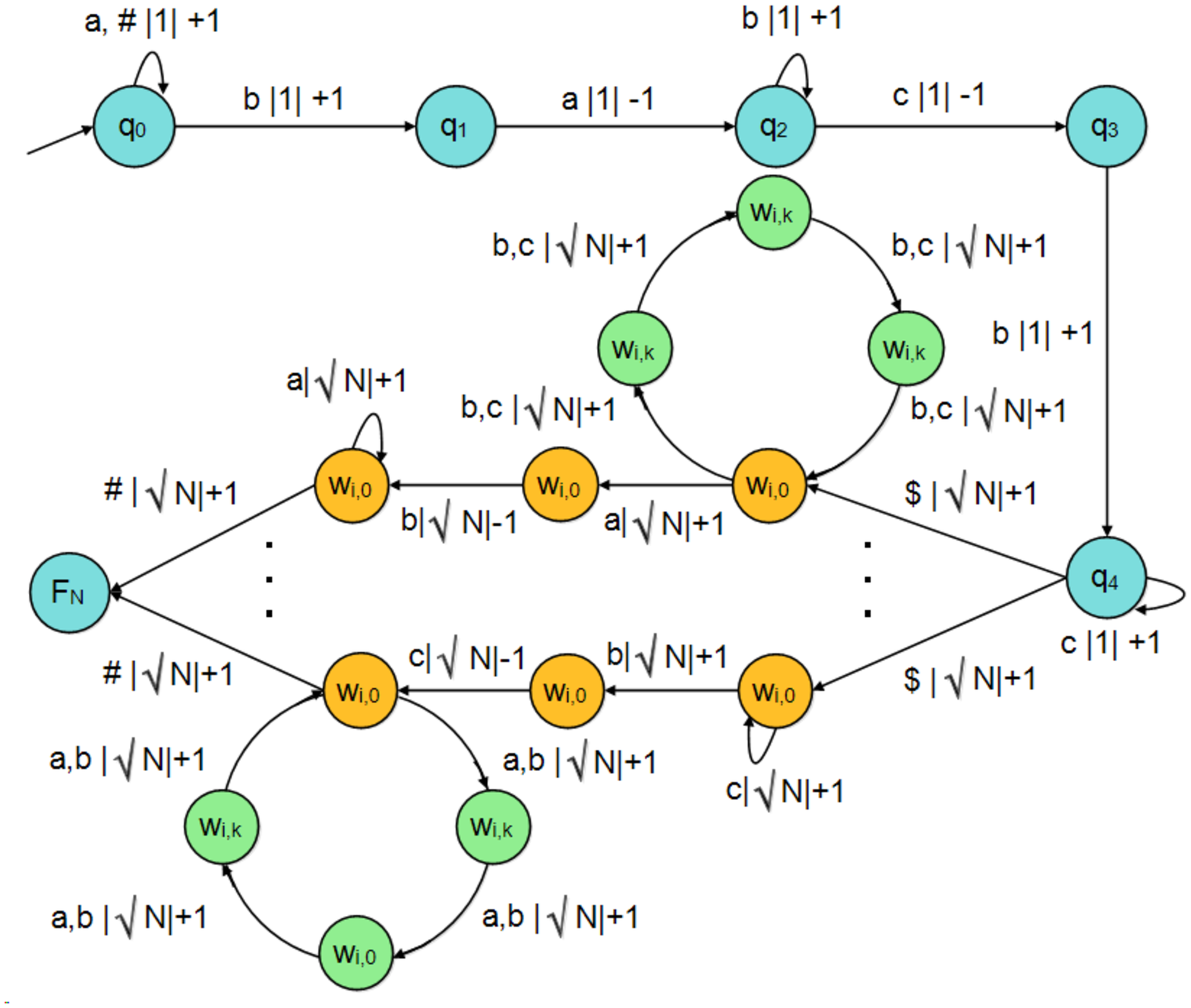}
	\caption{State transition diagram of $L_3$}
\end{figure}  
\begin{proof}
The design of proof for BZ reaction network is as follows. It consists of two phases. First, the 2QFA traverse the input to check the form $a^+b^+c^+$. On reading the right-end marker \$, the computation is split into \textit{N} paths such that $w_{1,0}, w_{2,0},..., w_{N,0}$. Secondly, the first path is used to check the number of $b$'s and $c$'s are equal or not. The second path is used to check the initial part of an input string to identify it is in $\{a^n b^n \mid n > 0\}$. On reading the right-end marker \$, the both paths are split into \textit{N} different paths with an equal amplitude $\dfrac{1}{\sqrt{N}}$.  Finally, upon reading the right-end marker \#, 
 if the number of \textit{a}'s and \textit{b}'s and \textit{b}'s and \textit{c}'s are equal in respective computational paths. Then, all paths come into \textit{N}-way Quantum Fourier transform (QFT) and either one acceptance state or rejectance states are observed. Suppose, if the input string is not in the corrected form, then all computation paths read the \# at different times. Thus, their amplitudes do not cancel each other and the input string is said to be rejected with probability $1-\dfrac{1}{N}$. Otherwise, the input string is said to be recognized by 2QFA with probability 1. 
 \end{proof}

\section{Summary}
In summary, 2QFA model can be efficiently designed for balanced chemical reaction and Belousov-Zhabotinsky (BZ) reaction network with one-sided error bound, which are halted in linear time. Table 3 shows the language recognition ability of different computational models. The classical 2DFA and 2PFA are known to be equal in computational power to one-way deterministic finite automata (1DFA) \cite{45, 46}. It has been proved that 2PFA's can be designed for non-regular languages in expected polynomial time. Additionally, it has been demonstrated that the chemical PDA can be designed for aforementioned chemical reactions with multiple stacks. The recognition of languages by native chemical automata cab be found in \cite{12, 13, 14}. But, we have shown that 2QFA can recognize such chemical reactions, without any external aid. It has been proved that 2QFA is more powerful than classical variants, because it follows the quantum superposition principle to be in more than one state at a time on the input tape. For the execution, it needs atleast $O(log~n)$ quantum states to store the position of tape head, where \textit{n} denotes the length of an input string. 

\begin{table}[!ht]
\begin{threeparttable}
\centering 
\caption{Comparison of computational power of models} 
\begin{tabular}{|l|c|c|c|c|}
	\hline
	Languages & Class & 2DFA/2PFA & Chemical FA/PDA & 2QFA \\
	\hline
	\makecell{$L_{1}=\{(a+b)^*a(a+b)^*b$\\  $(a+b)^*aa^*bb^*\}$} & RL & \cmark & \cmark & \cmark \\
	\hline
	 \makecell{A language $L_{2}$ consisting Dyck \\ language of balanced parentheses} & CFL & \xmark & \makecell{\cmark \\ (with 1-stack PDA)} & \cmark \\
	\hline
	\makecell{$L_{3}=\{a^nb^nc^n | n > 0\}$} & CSL & \xmark & \makecell{\cmark \\ (with 2-stack PDA)} & \cmark \\
	\hline
\end{tabular}
 \begin{tablenotes}
      \footnotesize	
      \item RL, CFL and CSL stand for regular languages, context-free languages and context-sensitive languages, respectively.
    \end{tablenotes}
  \end{threeparttable}
\end{table}

\section{Conclusion}
The enhancement in many existing computational approaches provides momentum to molecular and quantum simulations at the atomic level. It helps to test new abstract approaches for considering molecules and matter. Previous attempts to model the aforementioned chemical reactions used finite automata and pushdown automata with multiple stacks. In this paper, we focused on well-known languages of Chomsky hierarchy and modeled them using two-way quantum finite automata. The crucial advantage of the quantum approach is that these chemical reactions transcribed in formal languages can be parsed in linear time, without using any external aid. We have shown that two-way quantum automata are more superior than its classical variants by using quantum transitions. To the best of our knowledge, no such modeling of chemical reactions is performed using quantum automata theory so far. For the future purpose, we will try to represent complex chemical reactions in formal languages and model them using other quantum computational models.

\section{Additional Information}

\textbf{Conflict of interest} The authors declare that they have no conflict of
interest.

\section*{Acknowledgement}
S.Z. acknowledges support in part from the National Natural Science
Foundation of China (Nos. 61602532),  the Natural Science Foundation of Guangdong Province of China (No. 2017A030313378), and the Science and Technology Program of Guangzhou City of China (No. 201707010194).

\end{document}